\newtheorem{theorem}{Theorem}
\theoremstyle{plain}
\newtheorem{lemma}{Lemma}
\numberwithin{equation}{section}
\begin{document}
\title[On the GPI ]{The Generalized Poverty Index}
\author{Gane Samb LO}
\address{LERSTAD, Universit\'{e} Gaston Berger, BP 234 Saint-Louis, S\'{e}n%
\'{e}gal.}
\address{Affiliated to LSTA, Universt\'e Paris VI, UPMC, France.}
\email{gslo@ufrsat.org}
\urladdr{http://www.ufrsat.org/perso/gslo}
\keywords{asymptotic behavior, empirical process, hungarian construction,
poverty, indices}

\begin{abstract}
{\large We introduce the General Poverty Index (GPI), which summarizes most
of the known and available poverty indices, in the form 
\begin{equation*}
GPI=\delta (\frac{A(Q_{n},n,Z)}{nB(Q,n)}\overset{Q_{n}}{\underset{j=1}{\sum }%
}w(\mu _{1}n+\mu _{2}Q_{n}-\mu _{3}j+\mu _{4})d\left( \frac{Z-Y_{j,n}}{Z}%
\right) ),
\end{equation*}
where 
\begin{equation*}
B(Q_{n},n)=\sum_{j=1}^{Q}w(j),
\end{equation*}
$A\left( \cdot \right) ,$ $w\left( \cdot \right) ,and$ $d\left( \cdot
\right) $\ are given measurable functions, $Q_{n}$ is the number of the poor
in the sample, Z is the poverty line and $Y_{1,n}\leq Y_{2,n}\leq ...\leq
Y_{n,n}$\ are the ordered sampled incomes or expenditures of the individuals
or households. We show here how the available indices based on the poverty
gaps are derived from it. The asymptotic normality is then established and
particularized for the usual poverty measures for immediate applications to
poor countries data. }
\end{abstract}

\maketitle
\Large
\section{Introduction}

The Economists are interested in monitoring the welfare of the worse-off in
one given population. In this capacity, poverty measures are defined and used to
compare subgroups and to follow the evolution of the poor with respect to
time. A poverty measure is assumed to fulfill a number of axioms since the
pioneering work of Sen (\cite{sen}). Many authors proposed poverty indices
and studied their advantages, like Sen himself, Thon (\cite{thon}), Kakwani (%
\cite{kakwani}), Clark-Hemming-Ulph ( \cite{clark}), Foster-Greer-Thorbecke (%
\cite{fgt}), Ray (\cite{ray}), Shorrocks (\cite{shorrocks}). Most of the
required properties for such indices are stated and described in (\cite%
{zheng}) along with a broad survey of the available poverty indices. 

Asymptotic theories for theses quantities, when they come from random
samplings, have been given in recent years. Dia(\cite{dia}) used point
process theory to give asymptotic normality for the Foster-Greer-Thorbecke
(FGT) index. Sall and Lo(\cite{lo1}) studied an asymptotic theory for the
poverty intensity defined below and further, Sall, Seck and Lo(\cite{lo2})
proved a larger asymptotic normality for a general measure including the
Sen, Kakwani, FGT and Shorrocks ones. 

Now our aim, here, is to unify the monetary poverty measurements with
respect as well to Sen's axiomatic approach as to the asymptotic aspects. We
point out that poverty may be studied through aspects other than monetary
ones as well. It can be viewed through the capabilities to meet basic needs
(food, education, health, clothings, lodgings, etc.). In our monetary frame,
the main tools are the poverty indices. We give, here, a general poverty
index denoted as the General Poverty Index (GPI), which is aimed to
summarize all the known and former ones. Let us make some notation in order
to define it. 

We consider a population of individuals, each of which having a random
income or expenditure $Y$\ with distribution function $G(y)=P(Y\leq y).$\ An
individual is classified as poor\ whenever his income or expenditure $Y$\
fulfills $Y<Z,$\ where $Z$\ is a specified threshold level (the poverty
line). 

Consider now a random sample $Y_{1},Y_{2},...Y_{n}$\ of size $n$\ of
incomes, with empirical distribution function $G_{n}(y)=n^{-1}\#\left\{
Y_{i}\leq y:1\leq i\leq 1\right\} $. The number of poor individuals within
the sub-population is then equal to $Q_{n}=nG_{n}(Z).$ 

Given these preliminaries, we introduce measurable functions $A(p,q,z)$, $%
w(t)$, and $d(t)$\ of $p,q\in N,$\ and $z,t\in R$. The meaning of these
functions will be discussed later on. Set $B(Q_{n},n)=\sum_{i=1}^{q}w(i).$ 

Let now $Y_{1,n}\leq Y_{2,n}\leq ...\leq Y_{n,n}$\ be the order statistics
of the sample $Y_{1},Y_{2},...Y_{n}$\ of $Y$. We consider general\ \
poverty\ indices\ \ of the form 

\begin{equation}
GPI_{n}=\frac{A(Q_{n},n,,Z)}{nB(Q_{n},n)}\sum_{j=1}^{Q_{n}}w(\mu _{1}n+\mu
_{2}Q_{n}-\mu _{3}j+\mu _{4})\text{\ }d\left( \frac{Z-Y_{j,n}}{Z}\right) ,
\label{gpi01}
\end{equation}%
where $\mu _{1},\mu _{2},\mu _{3},\mu _{4}$\ are constants. By
particularizing the functions $A$ and $w$ and by giving fixed values to the $%
\mu _{i}^{\prime }s,$ we may find almost all the available indices, as we
will do it later on. In the sequel, (\ref{gpi01}) will be called a poverty
index (indices in the plural) or simply a poverty measure according to the
economists terminology. 

The poverty line Z is defined by economics specialists or governmental
authorities so that any individual or household with income (say yearly)
less than Z is considered as a poor one. The poverty line determination
raises very difficult questions as mentioned and shown in ( \cite{kak2}).
We suppose here that Z is known, given and justified by the specialists.
\bigskip 

Our unified and global approach will permit various research works, as well
in the Statistical Mathematics field as in the Economics one. It happens
that poverty indices are also somewhat closely connected with economic
growth questions. We should find conditions on the functions and the
constants in (\ref{gpi01}) so that any kind of needed requirements are met
and that the hypotheses imposed by the asymptotic normality are also
fulfilled. This may lead to a class of perfect or almost perfect poverty
measures. In this paper, we concentrate on the description of the GPI and on
the asymptotic normality theory. Our best achievement is that (\ref{gpi01}),
is asymptotically normal for a very broad class of underlying distributions.
These results are then specialized for the particular and popular indices. 

We then begin to describe all the available indices in the frame of (\ref%
{gpi01}) in the next section. In section 3, we establish the asymptotic
normality. Related application works to poverty databases can be found in 
\cite{lo6} for instance. 

\section{How does the GPI include the poverty indices}

We begin by making two remarks. First, for almost all the indices, the
function $\delta (\cdot )$ is the identity one 
\begin{equation*}
\forall (u\geq 0),\text{ }\delta (u)=I_{d}(u)=u.
\end{equation*}

We only noticed one exception in the Clark-Hemming-Ulph (CHUT) index.
Secondly, we may divide the poverty indices into non-weighted and weighted
ones. The non weighted measures correspond to those for which the weight is
constant and equal to one : 
\begin{equation*}
w(\mu _{1}n+\mu _{2}Q_{n}-\mu _{3}j+\mu _{4})\equiv 1.
\end{equation*}%
We begin with them. 

\subsection{The non-weighted indices.}

First of all, the Foster-Greer-Thorbecke (FGT) index of parameter \cite{fgt}
defined for $\alpha \geq 0,$%
\begin{equation}
FGT_{n}(\alpha )=\frac{1}{n}\overset{Q_{n}}{\underset{j=1}{\sum }}\left( 
\frac{Z-Y_{j,n}}{Z}\right) ^{\alpha }.  \label{sall2}
\end{equation}

is obtained from the GPI with 
\begin{equation*}
\begin{array}{cccc}
\delta =I_{d}, & w\equiv 1, & d(u)=u^{\alpha }, & B(Q_{n},n)=Q_{n}\text{ }and%
\text{ }A(Q_{n},n,Z)=Q_{n}.%
\end{array}%
\end{equation*}%
The Ray index defined by (see \cite{ray}), for $\alpha>0$,  
\begin{equation}
R_{R,n}=\frac{g}{nZ}\sum_{i=1}^{Q_{n}}((Z-Y_{j,n})/g)^{\alpha }
\end{equation}%
where  
\begin{equation}
g=\frac{1}{Q_{n}}\sum_{j=1}^{j=Q_{n}}(Z-Y_{j,n})
\end{equation}%
is derived from the GPI with  
\begin{equation*}
\begin{array}{cccc}
\delta =I_{d}, & w\equiv 1, & d(u)=u^{\alpha }, & B(Q_{n},n)=Q_{n}\text{ }and%
\text{ }A(Q_{n},n,Z)=Q_{n}(g/Z)^{\alpha -1}.%
\end{array}%
\end{equation*}%
The coefficient $A(Q_{n},n,Z)$\ depends here on the income or the
expenditure. This is quite an exception among the poverty indices. We may
also cite here the Watts index \ (see \cite{watts})  
\begin{equation*}
P_{W,n}=\frac{1}{n}\sum_{j=1}^{j=Q_{n}}(\ln Z-lnY_{j,n}).
\end{equation*}%
But this may be derived from the FGT one as follows. The income Y is
transformed into $\ln Y$\ and, consequently, the poverty line is taken as $%
\ln Z.$\ \ It follows that  
\begin{equation*}
W(Y)=FGT(1,\ln Y)
\end{equation*}%
for the poverty line $\ln Z$. The case is similar for the Chakravarty index
(see \cite{chakravarty}), $0<\alpha <1,$\  
\begin{equation*}
P_{Ch}=\frac{1}{n}\sum_{j=1}^{j=Q_{n}}(1-(\frac{Y_{j,n}}{Z})^{\alpha }).
\end{equation*}%
We may consider it through the FGT class  
\begin{equation*}
W(Y)=FGT(1,Y^{\alpha })
\end{equation*}%
for the poverty line $Z^{\alpha}$.

\noindent Now, we have that the CHU index is clearly of the GPI form with  

\begin{equation*}
\begin{array}{cccc}
\delta =(u)=u^{1/\alpha }, & w\equiv 1, & d(u)=u^{\alpha } & B(Q_{n},n)=Q_{n}%
\text{ }and\text{ }A(Q_{n},n,Z)=Q_{n}^{\alpha }/n^{\alpha -1}.%
\end{array}%
\end{equation*}%

\noindent Now let us describe the weighted indices.

\subsection{The weighted indices}

First, the Kakwani (\cite{kakwani}) class of poverty measures  
\begin{equation}
P_{KAK,n}(k)=\frac{Q_{n}}{n\Phi _{k}(Q_{n})}\overset{Q_{n}}{\underset{j=1}{%
\sum }}(Q_{n}-j+1)^{k}\left( \frac{Z-Y_{j,n}}{Z}\right) ,  \label{sall4}
\end{equation}%
where  
\begin{equation*}
\Phi _{k}(Q_{n})=\sum_{j=1}^{j=Q_{n}}j^{k\text{ \ }}=B(Q_{n},n)
\end{equation*}%
comes from the GPI with  
\begin{equation*}
\delta =I_{d},\text{ }w(u)\equiv (u),\text{ }d(u)=u,\text{ }\mu _{1}=0,
\end{equation*}%
\begin{equation*}
\mu _{2}=1,\text{ }\mu _{3}=-1,\text{ }\mu _{4}=1\text{ }and\text{ }%
A(n,Q_{n},Z)=Q_{n}
\end{equation*}%
For $k=1,$\ $P_{KAK,n}(1)$\ is nothing else but the Sen poverty measure  
\begin{equation}
P_{Sen}=\frac{2}{n(Q_{n}+1)}\overset{Q_{n}}{\underset{j=1}{\sum }}%
(Q_{n}-j+1)\left( \frac{Z-Y_{j,n}}{Z}\right) .  \label{sall3}
\end{equation}%
As to the Shorrocks (\cite{shorrocks}) index  
\begin{equation}
P_{SH,n}=\frac{1}{n^{2}}\overset{Q_{n}}{\underset{j=1}{\sum }}%
(2n-2j+1)\left( \frac{Z-Y_{j,n}}{Z}\right) ,  \label{sall5}
\end{equation}%
it is obtained from the GPI with 
\begin{equation*}
B(Q_{n},n)=Q(2n-Q),\text{ }A(n,Q_{n},Z)=Q_{n}(2n-Q_{n})/n
\end{equation*}%
and\  
\begin{equation*}
\begin{array}{cccccc}
\delta =I_{d}, & w(u)\equiv (u), & d(u)=u, & \mu _{1}=2, & \mu _{2}=0, & \mu
_{3}=2,%
\end{array}%
\mu _{3}=1.
\end{equation*}%
Thon (\cite{thon}) proposed\bigskip\ the following measure  
\begin{equation*}
P_{Th}=\frac{2}{n(n+1)}\overset{Q_{n}}{\underset{j=1}{\sum }}(n-j+1)\left( 
\frac{Z-Y_{j,n}}{Z}\right) 
\end{equation*}%
which belongs to the GPI family for  
\begin{equation*}
B(n,Q_{n})=Q_{n}(n-Q_{n}+1)/2,\text{ }A(n,Q_{n},Z)=Q(n-Q+1)/(n+1),
\end{equation*}%
and 
\begin{equation*}
\begin{array}{cccccc}
\delta =I_{d}, & w(u)\equiv u, & d(u)=u, & \mu _{1}=1, & \mu _{2}=0, & \mu
_{3}=1,%
\end{array}%
\begin{array}{cc}
\mu _{3}=1. & 
\end{array}%
\end{equation*}%

\noindent Not all the poverty indices are derived from the GPI. What precedes only
concerns those based on the poverty gaps  
\begin{equation*}
(Z-Y_{j}),\text{ }1\leq j\leq Q_{n}.
\end{equation*}%
We mention one of them in the next subsection. 

\subsection{An index not derived from the GPI}

The Takayama (\cite{takayama}) index  
\begin{equation*}
P_{TA,n}=1+\frac{1}{n}-\frac{2}{\mu n^{2}}\sum_{j=1}^{Q_{n}}(n-j+1)Y_{j,n},
\end{equation*}%
where $\mu $\ is the empirical mean of the censored income, cannot be
derived from the GPI. The main reason is that, it is not based on the
poverty gaps $Z-Y_{j,n}.$\ It violates the monotonicity axiom which states
that the poverty measure increases when one poor individual or household
becomes richer. 

$\bigskip $ 

Now we must study the so-called GPI with respect to the axiomatic approach
as well as to the asymptotic theory. We focus in this paper to the general
theory of asymptotic normality. The interest of this unified approach is
based on the fact that we obtain at once the asymptotic behaviors for all
the available poverty indices, as particular cases. Indeed, in the next
section, we will describe apply the general theorem to the particular usual
indices. 

\section{Asymptotic normality of the GPI}

Let us write the GPI in the form  
\begin{equation}
GPI_{n}=\delta (J_{n})  \label{sall1}
\end{equation}%
with  
\begin{equation}
J_{n}=\frac{1}{n}\overset{Q_{n}}{\underset{j=1}{\sum }}c(n,Q_{n},j)\text{ }%
d\left( \frac{Z-Y_{j,n}}{Z}\right) ,  \label{sall6}
\end{equation}%
where $c(n,Q_{n},j)=A(Q_{n},n,,Z)\times w(\mu _{1}n+\mu _{2}Q_{n}-\mu
_{3}j+\mu _{4})$ $/$ $B(Q_{n},n).$\ 

Since Y is an income or expenditure variable, its lower endpoint $y_{0}$\ is
not negative. This allows \ us to study (\ref{sall1}) via the transform $%
X=1/(Y-y_{0}).$\ Throughout this paper, the distribution function of $X$\ is
\begin{equation*}
F(\cdot )=1-G(y_{0}+1/\cdot ),
\end{equation*}%
whose upper endpoint is then +$\infty .$\ Hence (\ref{sall6}) is transformed
as  
\begin{equation}
J_{n}=\frac{1}{n}\overset{q}{\underset{j=1}{\sum }}c(n,q,j)\text{ }d\left( 
\frac{Z-y_{0}-X_{n-j+1,n}^{-1}}{Z}\right) .  \label{sall7}
\end{equation}

\bigskip\ 

We will need conditions on the function d($\cdot $) and on the weight
c(n,q,j), as in (\cite{lo2}). First assume that 

$\medskip $ 

$(D1)$\ $d(\cdot )$\ admits a continuous derivative on $]0,1)\medskip $. 

$(D2)$\ $d^{\prime }(\frac{z-y_{0}}{z})$\ and $d((z-y_{0})/z)$\ are
finite.\bigskip\ 

\qquad For $A(u)=1/F^{-1}(1-u)$, we assume that:\bigskip\ 

$(C1)$\ $A(\cdot )$\ is differentiable $(0,1)$\ ( and its derivative is
denoted $A^{\prime }(u)=a(u)$.) 

$(C2)$\ $a(\cdot )$\ is continuous on an interval $[a^{\prime },a^{\prime
\prime }]$\ with $0<a^{\prime }<a^{\prime \prime }<1$.\medskip\ 

$(C3)$\ $\exists \ u_{0}>0,\exists \ \eta >-3/2,\ \forall \ u\in \left(
0,u_{0}\right) ,\left\vert a(u)\right\vert <C_{0}\ u^{\eta }\exp
(\int_{u}^{1}b(t)t^{-1}dt),with$\ $b(t)\rightarrow 0\ as$\ $t\rightarrow 0.$ 

$\bigskip $ 

The condition (C3) means that $a(\cdot )$\ bounded by a regularly varying
function \  
\begin{equation*}
S(u)=C_{0}\ u^{\eta }\exp (\int_{u}^{1}b(t)t^{-1}dt)
\end{equation*}%
\ of exponent $\eta >-3/2$. \bigskip As to the function $\delta $, we need
it to be differentiable on $]0,+\infty \lbrack $, precisely :\newline

\bigskip $(E)$\ There is $\kappa >0$\ such that\ $\delta $($\cdot $) is
continuously differentiable on $]0,\kappa ]$.\newline
\bigskip \newline

We also need some conditions on the weight $c(\cdot )$. In order to state
the hypotheses, we introduce further notation. In fact we use in this paper
the representations of the studied random variables $X_{i},$\ $i\geq 1$,\ by 
$F^{-1}(1-U_{i}),$\ $i\geq 1$, where $U_{1},$\ $U_{2},...$\ is a sequence of
independent random variables uniformly distributed on $\left( 0,1\right) $.
Now let $U_{n}(\cdot )$\ and $V_{n}(\cdot )$\ be the uniform empirical
distribution and the empirical quantile function based on $U_{i},1\leq i\leq
n$.\ We have 

\begin{equation}
j\geq 1,\text{ \ }\frac{j-1}{n}<s\leq \frac{j}{n}\Longrightarrow \frac{j}{n}%
=U_{n}(V_{n}(s))  \label{sall10}
\end{equation}%
so that  
\begin{equation}
j\geq 1,\text{ \ }\frac{j-1}{n}<s\leq \frac{j}{n}\Longrightarrow
c(n,q,j)=c(n,q,nU_{n}(V_{n}(s))\equiv L_{n}(s).  \label{sall11}
\end{equation}%
Since $U_{n}(V_{n}(s))\rightarrow s,$\ as n$\rightarrow \infty ,$\ our
condition on the weight $c(\cdot )$\ is that the function $L_{n}(\cdot )$\
is uniformly bounded by some constant $D>0$\ and  
\begin{equation}
L_{n}(s)\rightarrow L(s),as\text{ }n\rightarrow \infty ,  \label{sall12}
\end{equation}%
where $L(\cdot )$\ is a non-negative $C^{1}-$function on $(0,1)$. 

$\bigskip $ 

We further require that, as $n\rightarrow \infty ,$\  
\begin{equation}
\underset{0\leq s\leq 1}{\sup }\left\vert \sqrt{n}(L_{n}(s)-L(s))-\gamma (s)%
\sqrt{n}(G_{n}(Z)-G(Z))\right\vert =o_{p}(1)  \label{sall12a}
\end{equation}%
for some function $\gamma (\cdot )$. Let us finally put  
\begin{equation*}
m(s)=L(s)\text{ }d\left( \frac{Z-y_{0}-1/F^{-1}(1-s)}{Z}\right) .
\end{equation*}

\bigskip We are now able to give our general theorem for the GPI. 

\begin{theorem}
Suppose that (C1-2-3), (D1-2) and (\ref{sall12a}) hold and let  
\begin{equation*}
\mu =\int_{0}^{G(Z)}\gamma (s)d\left( \frac{Z-y_{0}-1/F^{-1}(1-s)}{Z}\right) 
\text{ }ds.
\end{equation*}%
and  
\begin{equation*}
D=\int_{0}^{G(Z)}L(s)\text{ }d\left( \frac{Z-y_{0}-1/F^{-1}(1-s)}{Z}\right)
ds,
\end{equation*}%
Then  
\begin{equation*}
{\large \sqrt{n}(J_{n}-D)\rightarrow \mathcal{N}(0,\vartheta ^{2})}
\end{equation*}%
\ with  
\begin{equation*}
\vartheta ^{2}=\theta ^{2}+(m(G(Z))+\mu )^{2}G(Z)(1-G(Z))+\frac{%
2(m(G(Z))+\mu )}{Z}\int_{0}^{G(Z)}sL(s)h(s)ds
\end{equation*}%
and with  
\begin{equation*}
\theta ^{2}=Z^{-2}\int_{0}^{G(Z)}\int_{0}^{G(Z)}L(u)\text{ }L(v)\text{ }%
h(u)h(v)(u\wedge v-uv)\text{ }du\text{ }dv
\end{equation*}%
where  
\begin{equation*}
h(s)=a(s)\text{ }d^{\prime }(\frac{Z-y_{0}-1/F^{-1}(1-s)}{Z}).
\end{equation*}%
\bigskip If furthermore (E) holds and $D\in ]0,\kappa \lbrack ,$\ then  
\begin{equation*}
{\large \sqrt{n}(GPI}_{n}{\large -\delta (D))\rightarrow }\mathcal{N}{\large %
(0,\vartheta }^{2}\delta ^{\prime }(D)^{2})
\end{equation*}
\end{theorem}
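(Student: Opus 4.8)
The plan is to prove the first assertion $\sqrt{n}(J_n-D)\to\mathcal N(0,\vartheta^2)$ by rewriting $J_n$ as an integral functional of the uniform empirical and quantile processes, linearizing it, and identifying the Gaussian limit via the Hungarian construction; the second assertion then follows by the delta method. First I would pass to the uniform representation $X_i=F^{-1}(1-U_i)$, so that $X_{n-j+1,n}=F^{-1}(1-U_{j,n})$ and hence $1/X_{n-j+1,n}=A(U_{j,n})$ with $A(u)=1/F^{-1}(1-u)$. Using the step representation \eqref{sall11} of the weight together with the empirical quantile function $V_n$, I would recast \eqref{sall7} as
\begin{equation*}
J_n=\int_0^{G_n(Z)}L_n(s)\,d\!\left(\frac{Z-y_0-A(V_n(s))}{Z}\right)ds,
\end{equation*}
which is exactly the sample analogue of $D$: the same integrand with $L_n\to L$, $V_n(s)\to s$, and the random upper limit $G_n(Z)=Q_n/n$ replaced by $G(Z)$.

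Next I would decompose $\sqrt n(J_n-D)$ into three pieces matching these three replacements: a boundary term $\sqrt n\int_{G(Z)}^{G_n(Z)}L_n(s)\,d(\cdots V_n(s))\,ds$; a weight term $\sqrt n\int_0^{G(Z)}(L_n(s)-L(s))\,d(\cdots V_n(s))\,ds$; and a quantile term $\sqrt n\int_0^{G(Z)}L(s)\,[d(\cdots V_n(s))-d(\cdots s)]\,ds$. For the boundary term, continuity of the integrand (from (D1)--(D2)) together with $G_n(Z)\to G(Z)$ gives the approximation $m(G(Z))\,\sqrt n(G_n(Z)-G(Z))$. For the weight term, hypothesis \eqref{sall12a} replaces $\sqrt n(L_n(s)-L(s))$ uniformly by $\gamma(s)\sqrt n(G_n(Z)-G(Z))$, and $V_n(s)\to s$ then yields $\mu\,\sqrt n(G_n(Z)-G(Z))$. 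For the quantile term, a first-order Taylor expansion of $x\mapsto d((Z-y_0-A(x))/Z)$, whose derivative is $-Z^{-1}h(s)$ with $h(s)=a(s)\,d'((Z-y_0-A(s))/Z)$, produces $-Z^{-1}\int_0^{G(Z)}L(s)h(s)\,\sqrt n(V_n(s)-s)\,ds$ up to a remainder.

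Then I would invoke the Hungarian construction. By the weighted Cs\"org\H{o}--Cs\"org\H{o}--Horv\'ath--Mason approximation, on a suitable probability space the uniform quantile process $\sqrt n(V_n(s)-s)$ is, after weighting, uniformly close to $-B_n(s)$ for a sequence of Brownian bridges $B_n$, while $\sqrt n(G_n(Z)-G(Z))$ is close to the same bridge at $G(Z)$. Substituting turns the quantile term into the Gaussian integral $Z^{-1}\int_0^{G(Z)}L(s)h(s)B_n(s)\,ds$, whose variance is exactly $\theta^2$ by the bridge covariance $u\wedge v-uv$. Collecting the three pieces gives
\begin{equation*}
\sqrt n(J_n-D)\approx\frac{1}{Z}\int_0^{G(Z)}L(s)h(s)B_n(s)\,ds+(m(G(Z))+\mu)\,\sqrt n(G_n(Z)-G(Z)),
\end{equation*}
a single centered Gaussian whose total variance — the two self-variances plus the cross-covariance, all evaluated from $u\wedge v-uv$ — reproduces the stated $\vartheta^2$. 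Finally, since $J_n\to D$ in probability and $\delta$ is $C^1$ near $D\in{]0,\kappa[}$ by (E), the mean value form $\sqrt n(\delta(J_n)-\delta(D))=\delta'(\xi_n)\sqrt n(J_n-D)$ with $\delta'(\xi_n)\to\delta'(D)$ and Slutsky's lemma give $\sqrt n(GPI_n-\delta(D))\to\mathcal N(0,\vartheta^2\delta'(D)^2)$.

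The main obstacle is making the quantile-term linearization rigorous near $s=0$, the region of the poorest (equivalently the extreme order statistics), where the quantile process has large relative fluctuations and $a(s)$, hence $h(s)$, may blow up. This is precisely what (C1)--(C3) are designed to control: bounding $|a(s)|$ by the regularly varying $S(u)$ of exponent $\eta>-3/2$ is the sharp threshold guaranteeing both that the weighted approximation error $\int_0 L(s)h(s)\,[\sqrt n(V_n(s)-s)+B_n(s)]\,ds$ is $o_p(1)$ and that the limiting variance integrals converge. Controlling the Taylor remainder uniformly over $(0,G(Z))$ against this weight — so that the quadratic term is negligible despite the singular behaviour at the origin — is the delicate point of the argument.
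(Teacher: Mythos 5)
Your proposal is correct and takes essentially the same route as the paper: the paper's Lemma (which is formula (4.14) of \cite{lo2}) supplies exactly your decomposition --- the quantile term linearized through $h(s)=a(s)\,d'(\cdot)$ and replaced by $\frac{1}{Z}\int_{1/n}^{G(Z)}L(s)h(s)B_{n}(s)\,ds$ via the Cs\"org\H{o}--Cs\"org\H{o}--Horv\`ath--Mason construction, the boundary term $m(G(Z))B_{n}(G(Z))$, and the weight term handled by (\ref{sall12a}) --- after which the theorem's proof, like yours, sums the jointly Gaussian terms $N_{n}(1)$ and $(m(G(Z))+\mu)B_{n}(G(Z))$ and the second assertion follows by the delta method as you describe. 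The only caveat concerns your final variance claim: evaluating the cross term honestly from $E[B_{n}(s)B_{n}(G(Z))]=s(1-G(Z))$ for $s\le G(Z)$ produces a factor $(1-G(Z))$ in the last term of $\vartheta^{2}$, which the theorem's statement and the paper's computation (\ref{salln4}) omit but the Kakwani specialization (\ref{salln3}) retains, so your method would in fact yield that corrected formula rather than literally the stated $\vartheta^{2}$.
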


$\bigskip $ 

The interest of this paper resides on the particular applications of the
theorem for the known indices. Before this, we give the guidelines of the
proof. 

\section{PROOFS OF THE RESULTS}

All our results will be derived from the lemma below. But, first we place
ourselves on a probability space where one version of the so-called
Hungarian constructions holds. Namely, M. Cs\"{o}rg\H{o} and al. (see \cite%
{cchm}) have constructed a probability space holding a sequence of
independent uniform random variables $U_{1},\ U_{2},$\ ... and a sequence of
Brownian bridges $B_{1},B_{2},...$\ such that for each $0<\nu <1/2$, $as$\ $%
n\rightarrow \infty ,$%
\begin{equation}
\underset{1/n\leq s\leq 1-1/n}{\sup }\frac{\left\vert \beta
_{n}(s)-B_{n}(s)\right\vert }{(s(1-s))^{1/2-\nu }}=O_{p}(n^{-\nu })
\label{sall12c}
\end{equation}%
and for each $0<\nu <1/4$ 
\begin{equation}
\underset{1/n\leq s\leq 1-1/n}{\sup }\frac{\left\vert B_{n}(s)-\alpha
_{n}(s)\right\vert }{(s(1-s))^{1/2-\nu }}=O_{p}(n^{-\nu }),  \label{sall12d}
\end{equation}%
where $\left\{ \alpha _{n}(s)=\sqrt{n}\left( U_{n}(s)-s)\right) ,0\leqslant
s\leqslant 1\right\} $\ is the uniform empirical process and $\left\{ \beta
_{n}(s)=\sqrt{n}\left( s-V_{n}(s)\right) ,0\leqslant s\leqslant 1\right\} $\
is the uniform quantile process. (See also \cite{ch} for a shorter and more
direct proof, and \cite{mvz} for dual version, in the sens that, \ref%
{sall12c} holds for $0<\nu <1/2$ and \ref{sall12d} for $0<\nu <1/4$ in \cite%
{ch}, while \ref{sall12c} is established for $0<\nu <1/4$ and \ref{sall12d}
for $0<\nu <1/2$ in \cite{mvz}). Throughout $\nu $\ will be fixed with $0<\nu
<1/4.$\ Now we are able to give the lemma. 

\begin{lemma}
Suppose that (C1-2-3) and (D1-2) hold and  
\begin{equation}
\underset{0\leq s\leq 1}{\sup }\sqrt{n}\left\vert L_{n}(s)-L(s)\right\vert
=O_{P}(1)\text{ as n}\rightarrow \infty .  \label{sall12b}
\end{equation}%
Let  
\begin{equation*}
D=\int_{0}^{G(Z)}L(s)\text{ }d\left( \frac{Z-y_{0}-1/F^{-1}(1-s)}{Z}\right)
ds.
\end{equation*}%
Then we have the expansion  

{\Large 
\begin{equation*}
\sqrt{n}(J_{n}-D)=N_{n}(1)+N_{n}(2)
\end{equation*}%
}

{\Large 
\begin{equation*}
+\int_{1/n}^{G(Z)}\sqrt{n}(L_{n}(s)-L(s))d\left( \frac{Z-y_{0}-1/F^{-1}(1-s))%
}{Z}\right) ds+o_{P}(1)
\end{equation*}%
}

\noindent with  
\begin{equation}
N_{n}(1)=\frac{1}{Z}\int_{1/n}^{G(Z)}L(s)B_{n}(s)h(s)ds  \label{salln1}
\end{equation}%
and  
\begin{equation}
N_{n}(2)=m(G(Z))B_{n}(G(Z))  \label{salln2}
\end{equation}%
for  
\begin{equation*}
m(s)=L(s)\text{ }d\left( \frac{Z-y_{0}-1/F^{-1}(1-s)}{Z}\right) .
\end{equation*}
\end{lemma}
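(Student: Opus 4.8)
The plan is to establish the asymptotic expansion for $\sqrt{n}(J_n - D)$ by working on the Hungarian-construction probability space, where the empirical and quantile processes are uniformly close to Brownian bridges. The starting point is the transformed representation \eqref{sall7}, which writes $J_n$ as a sum over $j=1,\dots,q$ involving the quantity $X_{n-j+1,n}^{-1} = F^{-1}(1-U_{n-j+1,n})$, i.e.\ $A(U_{j,n})^{-1}$-type terms governed by the order statistics of the uniforms. First I would rewrite the finite sum as a Riemann--Stieltjes integral against the empirical quantile function $V_n(\cdot)$, using the step-function identity \eqref{sall11} so that the weight $c(n,q,j)$ becomes $L_n(s)$ on each interval $((j-1)/n, j/n]$. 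This converts $J_n$ into $\int_{0}^{G_n(Z)} L_n(s)\, d\!\left(\frac{Z-y_0-1/F^{-1}(1-V_n(s))}{Z}\right)ds$ up to a negligible boundary term, and the upper limit $G_n(Z) = Q_n/n$ must be compared with the deterministic $G(Z)$.

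Next I would center by subtracting $D$ and split the difference into three natural pieces. The first piece comes from replacing $F^{-1}(1-V_n(s))$ by $F^{-1}(1-s)$: a first-order Taylor expansion of $d$ together with the chain rule produces the factor $h(s) = a(s)\, d'\!\left(\frac{Z-y_0-1/F^{-1}(1-s)}{Z}\right)$, and the linearization of $F^{-1}(1-V_n(s)) - F^{-1}(1-s)$ introduces the quantile process $\beta_n(s) = \sqrt{n}(s - V_n(s))$. Using the Hungarian bound \eqref{sall12d} to swap $\beta_n$ for the Brownian bridge $B_n$ (controlling the weight $(s(1-s))^{1/2-\nu}$ against the regular-variation bound (C3) on $a(\cdot)$ near $0$) yields exactly $N_n(1) = \frac{1}{Z}\int_{1/n}^{G(Z)} L(s) B_n(s) h(s)\, ds$. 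The second piece is the boundary contribution from the random upper limit $G_n(Z)$ versus $G(Z)$: writing $\sqrt{n}(G_n(Z)-G(Z)) = \sqrt{n}(U_n(G(Z))-G(Z)) = \alpha_n(G(Z))$ and again invoking the Hungarian approximation converts this into $N_n(2) = m(G(Z)) B_n(G(Z))$, where $m(s)$ is the integrand of $D$ evaluated at the endpoint. The third piece is the weight-fluctuation term $\int_{1/n}^{G(Z)} \sqrt{n}(L_n(s)-L(s))\, d\!\left(\frac{Z-y_0-1/F^{-1}(1-s)}{Z}\right)ds$, which is kept intact in the statement and later handled via \eqref{sall12a}.

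The main obstacle is the control of the error terms near the left endpoint $s=0$, where $F^{-1}(1-s) \to \infty$ and the derivative $a(s)$ may blow up; this is precisely why the lower limit of integration is truncated at $1/n$ and why condition (C3) restricting $a(\cdot)$ to a regularly varying function of exponent $\eta > -3/2$ is imposed. The delicate estimate is to show that, after the Taylor expansion, the remainder (second-order) terms and the contributions of the integral over $(0,1/n)$ are $o_P(1)$: one bounds them using the modulus of continuity of the uniform empirical/quantile processes together with the integrability of $s^{\eta}$ weighted by $(s(1-s))^{1/2-\nu}$, which converges exactly when $\eta + 1/2 - \nu > -1$, i.e.\ for $\eta > -3/2$ with $\nu < 1/4$. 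I would therefore spend most of the effort verifying that each linearization remainder is dominated by an integrable regularly varying majorant, so that the Hungarian substitution is legitimate uniformly and the three main terms above emerge with all residuals absorbed into the final $o_P(1)$.
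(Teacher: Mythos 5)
Your overall architecture --- rewrite $J_n$ as $\int_0^{G_n(Z)} L_n(s)\,d\bigl((Z-y_0-1/F^{-1}(1-V_n(s)))/Z\bigr)\,ds$, then split $\sqrt{n}(J_n-D)$ into a linearization piece (Taylor expansion of $d$, the factor $h=a\cdot d'$, the swap of $\beta_n$ for $B_n$ under the Hungarian bound, yielding $N_n(1)$), a boundary piece at the random limit $G_n(Z)$ (yielding $N_n(2)=m(G(Z))B_n(G(Z))$ via $\sqrt{n}(G_n(Z)-G(Z))=\alpha_n(G(Z))$), and a weight-fluctuation piece, with the left-endpoint singularity controlled by (C3) --- is exactly the argument underlying the lemma. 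Be aware, however, that the paper itself does not redo this analysis: its proof consists of quoting the expansion as formula (4.14) of \cite{lo2} and citing that paper for the $o_P(1)$ control of all remainders. So what you sketch is the content of the citation, not of the paper's own text; the comparison below concerns the one step the paper does perform itself.

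That step is where your outline has a genuine gap: you never invoke the hypothesis $\sup_{0\leq s\leq 1}\sqrt{n}\,|L_n(s)-L(s)|=O_P(1)$, yet the lemma cannot be proved without it. In your decomposition the weight-fluctuation piece emerges in raw form as
\[
\int_{1/n}^{G_n(Z)}\sqrt{n}\bigl(L_n(s)-L(s)\bigr)\,d\left(\frac{Z-y_0-1/F^{-1}(1-V_n(s))}{Z}\right)ds ,
\]
with a \emph{random} upper limit and a \emph{random} argument inside $d$, whereas the lemma's conclusion states it with the deterministic limit $G(Z)$ and argument $s$. Passing from one to the other is precisely where the $O_P(1)$ hypothesis enters: it bounds the contribution of the stretch between $G_n(Z)$ and $G(Z)$ by $O_P(1)\cdot|G_n(Z)-G(Z)|\cdot\sup|d|=O_P(n^{-1/2})$, and it controls the substitution of $s$ for $V_n(s)$ inside $d$. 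This conversion is the entire new content of the lemma relative to the cited formula (the paper's proof ends with exactly this step), and it is the step your proposal asserts (``kept intact'') rather than proves. A secondary slip: your integrability criterion ``$\eta+1/2-\nu>-1$, i.e.\ $\eta>-3/2$ with $\nu<1/4$'' is wrong as stated (take $\eta=-1.4$, $\nu=0.2$); one must choose $\nu<\min(1/4,\eta+3/2)$, which is legitimate only because $\nu$ is at our disposal in $(0,1/4)$ --- this should be said explicitly.
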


\begin{proof}
This expansion is formulae (4.14) in (\cite{lo2}). Then, we have the
expansion  
\begin{equation*}
\sqrt{n}(J_{n}-C_{n})=\frac{1}{Z}\int_{1/n}^{G(Z)}L(s)B_{n}(s)h(s)\text{ }%
ds+n^{-1/2}L_{n}(1/n)\text{ }d\left( \frac{Z-y_{0}-1/F^{-1}(1-U_{1,n})}{Z}%
\right) 
\end{equation*}%
\begin{equation*}
+\int_{1/n}^{G_{n}(Z)}\sqrt{n}(L_{n}(s)-L(s))\text{ }d\left( \frac{%
Z-y_{0}-1/F^{-1}(1-V_{n}(s))}{Z}\right) ds
\end{equation*}%
\begin{equation*}
+\frac{1}{Z}\int_{G(Z)}^{G_{n}(Z)}L(s)B_{n}(s)h(s)ds+\frac{1}{Z}%
\int_{1/n}^{G_{n}(Z)}L_{n}(s)B_{n}(s)\text{ }(h(\zeta _{n}(s))-h(s))\text{ }%
ds
\end{equation*}%
\  
\begin{equation*}
+\frac{1}{Z}\int_{1/n}^{G_{n}(Z)}L_{n}(s)\text{ }(\beta _{n}\left( s\right)
-B_{n}(s))\text{ }h(\zeta _{n}(s))\text{ }ds
\end{equation*}%
It is proved in (\cite{lo2}) that  
\begin{equation*}
\sqrt{n}(J_{n}-C_{n})=N_{n}(1)+N_{n}(2)
\end{equation*}%
\begin{equation*}
+\int_{1/n}^{G_{n}(Z)}\sqrt{n}(L_{n}(s)-L(s))\text{ }d\left( \frac{%
Z-y_{0}-1/F^{-1}(1-V_{n}(s))}{Z}\right) ds+o_{P}(1).
\end{equation*}%
This gives  
\begin{equation*}
\sqrt{n}(J_{n}-C_{n})=N_{n}(1)+N_{n}(2)
\end{equation*}%
\begin{equation*}
+\int_{1/n}^{G(Z)}\sqrt{n}(L_{n}(s)-L(s))d\left( \frac{Z-y_{0}-1/F^{-1}(1-s))%
}{Z}\right) ds
\end{equation*}%
\begin{equation*}
+\int_{Gn(Z)}^{G(Z)}\sqrt{n}(L_{n}(s)-L(s))\text{ }d\left( \frac{%
Z-y_{0}-1/F^{-1}(1-V_{n}(s))}{Z}\right) ds+o_{P}(1)
\end{equation*}%
The condition (\ref{sall12b}) leads to the result. 
\end{proof}

$\bigskip $ 

We are now able to prove the Theorem. 

\begin{proof}
$\ $Let($\Omega $,$\Sigma $,P)\ be the probability space on which (\ref%
{sall12c}) and (\ref{sall12d}) hold. The Lemma together with (\ref{sall12b}%
), (\ref{sall12a}) and (\ref{salln2}), imply  
\begin{equation*}
{\large \sqrt{n}(J_{n}-D)=N}_{n}(1)+N_{n}(3)+o_{P}(1),
\end{equation*}%
where $N_{n}(1)$\ is defined in (\ref{salln1}) and  
\begin{equation}
N_{n}(3)=(m(G(Z)+\mu )\alpha _{n}(G(Z))+o_{P}(1)=(m(G(Z)+\mu
)B_{n}(G(Z))+o_{P}(1).  \label{sall21}
\end{equation}%
The vector $(N_{n}(1),N_{n}(3))$\ is Gaussian and  
\begin{equation}
cov(N_{n}(1),N_{n}(3))=\frac{m(G(Z))+\mu }{Z}%
E\int_{1/n}^{G(Z)}L(s)h(s)B_{n}(G(Z))B_{n}(s)ds  \label{salln4}
\end{equation}%
\begin{equation*}
=\frac{m(G(Z))+\mu }{Z}\int_{1/n}^{G(Z)}s\text{ }L(s)\text{ }h(s)\text{ }ds.
\end{equation*}%
Then $\sqrt{n}(J_{n}-D)$\ is a linear transform $N_{n}(1)+N_{n}(3)$\ of the
Gaussian vector $(N_{n}(1),N_{n}(3)),$\ plus an $o_{P}(1)$\ term. The
variance of this Gaussian term is easily computed through (\ref{salln4}) and
the conclusion follows, that is $\sqrt{n}(J_{n}-D)$\ is asymptotically a
centered Gaussian random variable with variance (\ref{salln4}). 
\end{proof}

\section{$\protect\bigskip $Asymptotic normality of particular indices}

\subsection{The FGT-like class}

This concerns the indices of the form  
\begin{equation*}
FGT(\alpha )=\frac{1}{n}\overset{Q_{n}}{\underset{j=1}{\sum }}d\left( \frac{%
Z-Y_{j,n}}{Z}\right) .
\end{equation*}%
We have here  
\begin{equation*}
L_{n}=1
\end{equation*}%
so that  
\begin{equation*}
\gamma =0
\end{equation*}%
Then  
\begin{equation*}
{\large \sqrt{n}(J}_{n}{\large -D)\rightarrow \mathcal{N}(0,\vartheta }^{2})
\end{equation*}%
\ with  
\begin{equation*}
\vartheta ^{2}=\theta ^{2}+m(G(Z)^{2}G(Z)(1-G(Z))+\frac{2m(G(Z))}{Z}%
\int_{0}^{G(Z)}sh(s)ds
\end{equation*}%
and  
\begin{equation*}
D=\int_{0}^{G(Z)}\left( \frac{Z-y_{0}-1/F^{-1}(1-s)}{Z}\right) ^{\alpha }ds.
\end{equation*}%
We should remark that the conditions $(D1-D2)$\ hold for $d(u)=u^{\alpha
},\alpha \geq 0.$ 

\subsubsection{The statistics nearby the FGT-class}

This concerns the statistics of the form  
\begin{equation*}
J_{n}=\delta \biggl(\frac{A(Q_{n},n)}{n}\overset{Q_{n}}{\underset{j=1}{\sum }}%
d\left( \frac{Z-Y_{j,n}}{Z}\right) \biggl) ,
\end{equation*}%
where we have a random weight not depending on the rank's statistic. We will
have two sub-cases. 

\subsubsection{The case of CHU's index}

Recall 

\begin{equation*}
CHU_{n}(\alpha )=\frac{Q_{n}}{nZ}\left\{ \frac{1}{Q_{n}}%
\sum_{j=1}^{Q_{n}}(Z-Y_{j,n})^{\alpha }\right\} ^{1/\alpha }
\end{equation*}%
\begin{equation*}
=\left\{ \frac{1}{n}\frac{Q_{n}^{\alpha -1}}{n^{\alpha -1}}%
\sum_{j=1}^{Q_{n}}(\frac{Z-Y_{j,n}}{Z})^{\alpha }\right\} ^{1/\alpha
}=\delta (J_{n})
\end{equation*}%
We easily get,  
\begin{equation*}
\sqrt{n}((q/n)^{\alpha -1}-G(Z)^{\alpha -1})=(\alpha -1)G(Z)^{\alpha -2}%
\sqrt{n}(G_{n}(Z)-G(Z))+o_{p}(1)
\end{equation*}%
\begin{equation*}
=(\alpha -1)G(Z)^{\alpha -2}B_{n}(G(Z))+o_{p}(1).
\end{equation*}%
By putting  
\begin{equation*}
C_{n}=FGT(\alpha )=\frac{1}{n}\sum_{j=1}^{Q_{n}}(\frac{Z-Y_{j,n}}{Z}%
)^{\alpha }
\end{equation*}%
and  
\begin{equation}
C=\int_{0}^{G(Z)}\left( \frac{Z-y_{0}-1/F^{-1}(1-s)}{Z}\right) ^{\alpha }ds,
\label{chu00}
\end{equation}%
we have, by the general theorem  
\begin{equation*}
\sqrt{n}(C_{n}-C)=N_{n}(1)+N_{n}(2)+o_{p}(1)
\end{equation*}%
with $L=1$\ . By combining these formulae, we get  
\begin{equation*}
\sqrt{n}(J_{n}-G(Z)^{\alpha -1}C)\rightarrow N(0,\zeta ^{2})
\end{equation*}%
with  
\begin{equation*}
\zeta ^{2}=\theta ^{2}+H(1-G(Z))\int_{0}^{G(Z)}s\text{ }%
a(s)ds+H^{2}G(Z)(1-G(Z))/2
\end{equation*}%
where,  
\begin{equation*}
H=C(\alpha -1)+G(Z)m(G(Z))G(Z)^{\alpha -2}.
\end{equation*}%
Finally, we get  
\begin{equation*}
\sqrt{n}(CHU_{n}(\alpha )-\delta (G(Z)^{\alpha -1}C)\rightarrow N(0,(\zeta
\delta ^{\prime }(G(Z)^{\alpha -1}C)^{2}),
\end{equation*}%
where  
\begin{equation*}
\delta ^{\prime }(G(Z)^{\alpha -1}C)^{2}=G(Z)^{-(\alpha -1)^{2}/\alpha
}C^{(1-\alpha )/\alpha }.
\end{equation*}

\subsubsection{The case of Ray's index}

Recall 

\begin{equation}
P_{R,n}(\alpha)=\frac{g}{nZ}\sum_{j=1}^{Q_{n}}((Z-Y_{j,n})/g)^{\alpha }
\end{equation}%
where  
\begin{equation}
g=\frac{1}{q}\sum_{j=1}^{j=Q_{n}}(Z-Y_{j,n}).
\end{equation}%
We have  
\begin{equation*}
J_{n}=g^{\alpha -1}\times C_{n}
\end{equation*}%
with  
\begin{equation*}
C_{n}=FGT_{n}(\alpha )
\end{equation*}%
and  
\begin{equation*}
C(\alpha )=\int_{0}^{G(Z)}\left( \frac{Z-y_{0}-1/F^{-1}(1-s)}{Z}\right)
^{\alpha }ds.
\end{equation*}%
We use the notation for the CHU index and we also get (\ref{chu00}). But  
\begin{equation*}
g=\frac{Zn}{Q_{n}}\times \frac{1}{n}\sum_{j=1}^{j=Q_{n}}(\frac{Z-Y_{j,n}}{Z}%
) \equiv \frac{Zn}{Q_{n}}K_{n}.
\end{equation*}%
We also have  
\begin{equation*}
\sqrt{n}(K_{n}-K)=\frac{1}{Z}%
\int_{1/n}^{G(Z)}B_{n}(s)a(s)ds+m_{1}(G(Z))B_{n}(G(Z))+o_{p}(1)
\end{equation*}%
with  
\begin{equation*}
K=C(1)=\int_{0}^{G(Z)}\frac{Z-y_{0}-1/F^{-1}(1-s)}{Z}ds
\end{equation*}%
and  
\begin{equation*}
\sqrt{n}(\frac{Zn}{q}-ZG(Z)^{-1})=Z\sqrt{N}(G(Z)-G_{n}(Z))G(Z)^{-2}+o_{p}(1)
\end{equation*}%
\begin{equation*}
=-Z\sqrt{n}B_{n}(G_{n}(Z))G(Z)^{-2}+o_{p}(1).
\end{equation*}%
By combining all that precedes, we arrive at  
\begin{equation*}
\sqrt{n}(g-KZG(Z)^{-1})=(m_{1}(G)ZG(Z)^{-1}-KZG(Z)^{-2})B_{n}(G(Z))
\end{equation*}%
\begin{equation*}
+\frac{1}{G(Z)}\int_{1/n}^{G(Z)}B_{n}(s)a(s)ds+o_{p}(1)
\end{equation*}%
and  
\begin{equation*}
\sqrt{n}(g^{\alpha -1}-(KZ/G(Z))^{\alpha -1})=(\alpha -1)(KZ/G(Z))^{\alpha
-2}
\end{equation*}%
\begin{equation*}
=(\alpha -1)(KZ/G(Z))^{\alpha -2}\text{ }%
(m_{1}(G)ZG(Z)^{-1}-KZG(Z)^{-2})B_{n}(G(Z))
\end{equation*}%
\begin{equation*}
+\frac{(\alpha -1)(KZ/G(Z))^{\alpha -2}}{G(Z)}%
\int_{1/n}^{G(Z)}B_{n}(s)a(s)ds+o_{p}(1)
\end{equation*}%
\begin{equation*}
=R_{1}B_{n}(G(Z))+R_{2}\int_{1/n}^{G(Z)}B_{n}(s)a(s)ds+o_{p}(1).
\end{equation*}%
Finally  
\begin{equation*}
\sqrt{n}(R_{n}-(KZ/G(Z))^{\alpha -1})C)=
\end{equation*}%
\begin{equation*}
\frac{(KZ/G(Z))^{\alpha -1})}{Z}\int_{1/n}^{G(Z)}B_{n}(s)h(s)ds+(KZ/G(Z))^{%
\alpha -1})\text{ }m_{\alpha }(G(Z))B_{n}(G(Z))
\end{equation*}%
\begin{equation*}
+CR_{1}B_{n}(G(Z))+CR_{2}\int_{1/n}^{G(Z)}B_{n}(s)a(s)ds+o_{p}(1)
\end{equation*}%
\begin{equation*}
=\int_{1/n}^{G(Z)}B_{n}(s)\psi (s)ds+\left\{ (KZ/G(Z))^{\alpha -1})\text{ }%
m_{\alpha }(G(Z))+CR_{1}\right\} B_{n}(G(Z))+o_{P}(1)
\end{equation*}

\begin{equation*}
=\int_{1/n}^{G(Z)}B_{n}(s)a(s)ds+A_{2}B_{n}(G(Z))+o_{P}(1),
\end{equation*}%
with  
\begin{equation*}
\psi (s)=a(s)\left\{ C(\alpha )R_{2}+(KZ/G(Z))^{\alpha -1}Z^{-1}d^{\prime
}(Z^{-1}(Z-y_{0}-1/F^{-1}(1-s)))\right\} .
\end{equation*}%
Notice that $\int_{1/n}^{G(Z)}B_{n}(s)h(s)ds+A_{1}B_{n}(G(Z))$\ is a normal
centered random variable with variance  

{\Large 
\begin{multline*}
\xi ^{2}=\int_{0}^{G(Z)}\int_{0}^{G(Z)}\psi (u)\psi (v)(u\wedge v-uv)\text{ }%
du\text{ }dv \\
+A_{1}^{2}G(Z)(1-G(Z))+2A_{1}(1-G(Z))\int_{0}^{G(Z)}s\text{ }\psi (s)ds.
\end{multline*}%
}

\noindent We conclude that  
\begin{equation*}
\sqrt{n}(P_{R,n}(\alpha)-(KZ/G(Z))^{\alpha -1})C)\rightarrow _{d}N(0,\xi ^{2})
\end{equation*}%
with  
\begin{equation*}
m_{\alpha }(u)=(Z^{-1}(Z-y_{0}-1/F^{-1}(1-s))^{\alpha },
\end{equation*}%
\begin{equation*}
R_{1}=(\alpha -1)(KZ/G(Z))^{\alpha -2}\text{ }%
(m_{1}(G)ZG(Z)^{-1}-KZG(Z)^{-2}),
\end{equation*}%
\begin{equation*}
R_{2}=(\alpha -1)(KZ/G(Z))^{\alpha -2}G(Z)^{-1},
\end{equation*}%
and  
\begin{equation*}
A_{1}=(KZ/G(Z))^{\alpha -1})\text{ }m_{\alpha }(G(Z))+CR_{1}
\end{equation*}

\subsection{The Shorrocks-like indices}

This concerns the Thon and Shorrocks measures. They both have a similar
asymptotic behavior. 

For Shorrocks's index, we have  
\begin{equation*}
P_{SH,n}=\frac{1}{n^{2}}\overset{Q_{n}}{\underset{j=1}{\sum }}%
(2n-2j+1)\left( \frac{Z-Y_{j,n}}{Z}\right) .
\end{equation*}%
But  
\begin{equation}
j\geq 1,\ \frac{j-1}{n}<s\leq \frac{j}{n}\Longrightarrow
L_{n}(s)=c(n,q,j)=(2-2\ast j/n+1/n)  \label{sall14}
\end{equation}%
\begin{equation*}
\rightarrow L(s)=2(1-s),
\end{equation*}%
and,  
\begin{equation*}
\sqrt{n}(L_{n}(s)-L(s))=-2\ast \sqrt{n}(U_{n}(V_{n}(s))-s)+1/\sqrt{n,}
\end{equation*}%
By\ (\cite{shwell}), $\ $p.151,  
\begin{equation*}
\sqrt{n}\underset{0\leq s\leq 1}{\sup }\left\vert L_{n}(s)-L(s)\right\vert
\leq 3/\sqrt{n}.
\end{equation*}%
and then  
\begin{equation*}
\gamma \equiv 0\text{, }h(\cdot )=a(\cdot )
\end{equation*}%
For the Thon Statistic,  
\begin{equation*}
P_{T,n}=\frac{2}{n(n+1)}\overset{Q_{n}}{\underset{j=1}{\sum }}(n-j+1)\left( 
\frac{Z-Y_{j,n}}{Z}\right) ,
\end{equation*}%
we also have  
\begin{equation*}
L(s)=2(1-s),\text{ }\gamma \equiv 0,\text{ }a(\cdot )=h(s).
\end{equation*}%
In both cases, for $P_n=P_{SH,n}$ or $j_n=P_{T,n}$, we get  
\begin{equation*}
{\large \sqrt{n}(P}_{n}{\large -D)\rightarrow \mathcal{N}(0,\vartheta }^{2})
\end{equation*}%
\ with  
\begin{equation*}
D=2\int_{0}^{G(Z)}(1-s)\text{ }\left( \frac{Z-y_{0}-1/F^{-1}(1-s)}{Z}\right)
ds,
\end{equation*}%
\begin{equation*}
\vartheta ^{2}=\theta ^{2}+m(G(Z)G(Z)(1-G(Z))+\frac{4m(G(Z))}{Z}%
\int_{0}^{G(Z)}s(1-s)a(s)ds
\end{equation*}%
and with  
\begin{equation*}
\theta ^{2}=4Z^{-2}\int_{0}^{G(Z)}\int_{0}^{G(Z)}(1-u)(1-v)\text{ }%
a(u)a(v)(u\wedge v-uv)\text{ }du\text{ }dv.
\end{equation*}

\subsection{The Kakwani-class.}

The Kakwani class  
\begin{equation*}
P_{KAK,n}=\frac{Q_{n}}{n\Phi _{k}(Q_{n})}\overset{Q_{n}}{\underset{j=1}{\sum 
}}(Q_{n}-j+1)^{k}\left( \frac{Z-Y_{j,n}}{Z}\right) ,
\end{equation*}%
is introduced with a positive integer. We consider here that k is merely a
non-negative real number. It is proved in (\cite{lo3}) that  
\begin{equation*}
L(s)=(k+1)(1-s/G(Z))^{k}
\end{equation*}%
and that  
\begin{equation*}
\gamma (s)=k(k+1)(1-s/G(Z))^{k-1}(s/G(Z)^{2}).{\large \ }
\end{equation*}%
We remark that $m(G(Z))=0.$\ Then our result is particularized as  
\begin{equation*}
{\large \sqrt{n}(P}_{KAK,n}(k){\large -D)\rightarrow \mathcal{N}(0,\vartheta 
}^{2})
\end{equation*}%
\ with  
\begin{equation}
\vartheta ^{2}=\theta ^{2}+\mu ^{2}G(Z)(1-G(Z))+\frac{2\mu }{Z}%
(1-G(Z))\int_{0}^{G(Z)}sL(s)h(s)ds  \label{salln3}
\end{equation}%
and with  
\begin{equation*}
\theta ^{2}=Z^{-2}\int_{0}^{G(Z)}\int_{0}^{G(Z)}L(u)\text{ }L(v)\text{ }%
h(u)h(v)(u\wedge v-uv)\text{ }du\text{ }dv.
\end{equation*}%
for a fixed real number $k\geq 1.$ 

\bigskip\ 

We have now finished the poverty indices' review. Some of these results have
been simulated and applied in particular issues with the Senegalese Data. 

\section{Conclusion}

The GPI includes most of the poverty indices. We have established here their
asymptotic normality with immediate applications to poor countries data for
finding accurate confidence intervals of the real poverty measurement. In
coming papers, a special study will be devoted to the Takayama statistic.
The GPI is to be thoroughly visited through the poverty axiomatic approach
as well. 

\newpage

\end{document}